\newtheorem{theo}{Theorem}[section]
\newtheorem{cor}[theo]{Corollary}
\begin{document}

\begin{center}
{\large \bf Algorithms for Pattern Containment in $0$-$1$ matrices}
\end{center}

\begin{center}
P.A. CrowdMath
\end{center}

\bigskip

\begin{abstract}
We say a zero-one matrix $A$ avoids another zero-one matrix $P$ if no submatrix of $A$ 
can be transformed to $P$ by changing some ones to zeros. A fundamental problem is to study the extremal function $ex(n,P)$, the maximum number of nonzero entries in 
an $n \times n$ zero-one matrix $A$ which avoids $P$. To calculate exact values of $ex(n,P)$ for specific values of $n$, we need containment algorithms
which tell us whether a given $n \times n$ matrix $A$ contains a given pattern matrix $P$. In this paper, we present optimal algorithms to determine when an $n \times n$ matrix $A$
contains a given pattern $P$ when $P$ is a column of all ones, an identity matrix, a tuple identity matrix, an $L$-shaped pattern, or a cross pattern. These algorithms run in $\Theta(n^2)$ time, 
which is the lowest possible order a containment algorithm can achieve. When $P$ is a rectangular all-ones matrix, we also obtain an improved running time algorithm, albeit with a higher order.
\end{abstract}

\medskip

\section{Introduction}

In this paper, we study matrices or arrays with only two distinct entries, $0$ and $1$,  that avoid certain patterns. We say that a $0$-$1$ matrix $A$ contains another $0$-$1$ matrix $P$ if $A$ has a submatrix that can be transformed into $P$ by changing any number of ones to zeros. Otherwise, $A$ is said to avoid $P$. 

We are interested in algorithms to determine whether an $n \times n$ input matrix $A$ avoids a fixed pattern $P$. Algorithms for pattern containment are naturally related to the classical matrix extremal problem, which seeks to find maximum number of nonzero entries in an $n \times n$ zero-one matrix $A$ which avoids $P$. This maximum number is called the extremal function $ex(n,P)$.

Extremal theory of matrices have been well-studied. F\"{u}redi and Hajnal conjectured that $ex(n,P)=O(n)$ for all permutation matrices $P$ \cite{FH}. Klazar showed that this conjecture implies the Stanley-Wilf conjecture \cite{Kl}. Marcus and Tardos proved the  F\"{u}redi and Hajnal conjecture \cite{MT} and hence settled the Stanley-Wilf conjecture.  Keszegh conjectured that $ex(n, P)=O(n)$ for all tuple permutation matrices $P$ \cite{K}. Geneson proved that the conjecture is true \cite{G}.

Extremal theory of zero-one matrices have found applications to areas such as computational geometry and graph theory. For instance, extremal functions have been used to analyze the complexity of an algorithm for computing a shortest rectilinear path aviding rectilinear obstacles in the plane \cite{M}. Furthermore, if we associate two dimensional $0$-$1$ matrices with ordered bipartite graphs by relating rows and columns to the two ordered partite sets of vertices and  interpreting ones as edges, then this extremal problem can be viewed as the Tur\'an extremal problem for ordered bipartite graphs \cite{PT}.  When $R^{k,\ell}$ is a $k \times \ell$ matrix of all ones, the extremal problem $ex(n, R^{k,\ell})$ is the matrix version of the classical Zarankiewicz problem.  K\H{o}v\'{a}ri, S\'{o}s, and Tur\'{a}n found an upper bound $O(n^{2 - {\max(k, \ell) \over k \ell}})$ on $ex(n,R^{k,\ell})$ \cite{KST}. A lower bound $\Omega(n^{2 - {k + \ell - 2 \over k \ell -1}})$ was also known \cite{ES}.

For bounding extremal functions $ex(n, P)$ of forbidden $0$-$1$ matrices $P$, it can be useful to calculate exact values of the extremal function for small values of $n$. One way to do this is to check whether any of the $n \times n$ matrices with $k$ ones avoid $P$ for increasing values of $k$. To determine whether a $n\times n$ zero-one matrix $A$ contains a $k \times \ell$ zero-one matrix $P$, the naive algorithm would be to check every $k \times \ell$ submatrix of $A$ to see if any of them can be changed to $P$ by changing some ones to zeroes. This algorithm takes $O({n \choose k}{n \choose \ell})=O(n^{k+\ell})$ time.

For specific patterns $P$, we can come up with faster algorithms to see if an $n \times n$ matrix $A$ contains $P$. We define a zero-one rectangular matrix to be an $L$-shaped pattern if its first column and last row are both full of ones and it has zeroes elsewhere. More generally, we call a zero-one rectangular matrix a cross pattern if it has one row and one column both full of ones and zeroes elsewhere. We are also interesting in studying identity matrices, which are square matrices with ones on the diagonal and zeroes everywhere else, an example of permutation matrices in the F\"{u}redi and Hajnal conjecture  \cite{FH}. A $j$-tuple identity matrix is obtained by replacing each $1$ in an identity matrix with a column of $j$ ones and each $0$ in the identity matrix with a column of $j$ zeroes. Tuple identity matrices are a special case of the tuple permutation matrix studied by Geneson \cite{G}.

When $P$ is a column of all ones $R^{k,1}$, an identity matrix, a tuple identity matrix, an $L$-shaped pattern, or a cross pattern, we present algorithms to determine whether an $n\times n$ $0$-$1$ matrix $A$ contains $P$ with worst case running time of $O(n^2)$. This is significant, because the containment algorithm for any pattern $P$ must runs in worst case $\Omega(n^2)$, as we show in this paper. Therefore, the worst case running time for these patterns is $\Theta(n^2)$, the lowest possible order a containment algorithm can achieve.

When $P$ is a $k \times \ell$ all-ones matrix $R^{k,\ell}$, we present an algorithm that runs in time $O(n^{\min(k,\ell)+1})$, which is still an improvement on the naive algorithm, which runs in time $O(n^{k+\ell})$.

Our algorithms that we present are signficant because running these efficient algorithms may help us obtain data to improve known bounds on the matrix extremal function. For instance, if we run our algorithm for containment of $R^{k,\ell}$ we should be able to obtain exact values of $ex(n, R^{k,\ell})$ for small values of $n$. This test data may give us insight on how to narrow the gap between the known upper and lower bounds of $O(n^{2 - {\max(k, \ell) \over k \ell} })$ and $\Omega(n^{2 - {k + \ell - 2 \over k \ell -1}})$ on $ex(n, R^{k,\ell})$.

Our paper is organized as follows. In section 2, we use the extremal function to obtain preliminary bounds on containment algorithm complexity. In particular, we establish the worst case running time of $\Omega(n^2)$ of any containment algorithm. In section 3, we present our containment algorithms for several specific patterns $P$ and analyze their complexity. We present $\Theta(n^2)$ algorithms for a column of all ones, an identity matrix, a tuple identity matrix, an $L$-shaped pattern, and a cross pattern and a higher order algorithm for a rectangular all-ones matrix.

\medskip

\section{Containment algorithm complexity in terms of extremal functions}

In this section, we use the extremal function $ex(n,P)$ to obtain upper and lower bounds on the complexity of an algorithm to determine containment of $P$. Our first result shows that any containment algorithm has at least quadratic running time.

\begin{theo}
\label{lower}
For any $n \times n$ matrix $A$ and any pattern $P$, an algorithm to determine whether $A$ contains $P$ has worst case $\Omega(n^2)$.
\end{theo}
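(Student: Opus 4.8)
The plan is to argue via an adversary / information-theoretic lower bound: any correct containment algorithm must, in the worst case, inspect essentially all $n^2$ entries of $A$. The key observation is that a single entry of $A$ can be decisive for containment, so an algorithm that skips even one entry can be fooled.

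First I would set up the adversary argument. Suppose an algorithm claims to decide containment while examining fewer than some constant fraction of the $n^2$ entries; I want to produce two matrices $A$ and $A'$ that are indistinguishable to the algorithm on the entries it queries but that differ in whether they contain $P$. The cleanest way to do this is to choose a ``witness'' configuration: pick a placement of $P$ inside an $n \times n$ grid using the positions where $P$ has ones, and let $A$ be the matrix that is $1$ exactly on those positions (so $A$ contains $P$), while $A'$ agrees with $A$ except that one of those critical one-entries is flipped to $0$, destroying that particular embedding. If the algorithm never queries that flipped entry, it cannot tell $A$ from $A'$.

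The main obstacle I anticipate is the following subtlety: flipping a single entry might not actually change containment, because $P$ could be embedded elsewhere in the matrix. To rule this out I would take $A$ to be the sparse matrix that is a single isolated copy of $P$ (placed, say, in the top-left block, with every other entry $0$), so that there is exactly one embedding of $P$ and it uses every one-entry of that copy. Then flipping any one of the $|P|$ critical entries to $0$ genuinely yields a matrix $A'$ avoiding $P$. The adversary strategy is then: answer every query consistently with the all-zero matrix until the algorithm has queried all entries of the designated copy of $P$; if the algorithm halts early having missed even one such entry, the adversary commits the value at the unqueried entry so as to contradict the algorithm's output. Since this forces the algorithm to read all $|P|$ designated entries, and more importantly to locate them, I would argue that in fact the algorithm must read a constant fraction of all entries: the adversary is free to place the single copy of $P$ in any of $\Theta(n^2)$ positions, and can use an online strategy that keeps all queried entries equal to $0$, deferring the choice of where the copy lives until forced.

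The cleanest rigorous route, which I would ultimately follow, is this online adversary: maintain the invariant that every queried entry has been answered $0$ and that a valid copy of $P$ can still be hidden among the unqueried entries (and also that a configuration avoiding $P$ is still consistent). As long as more than $|P|$ entries remain unqueried, both a containing completion and an avoiding completion of the partial matrix exist, so the algorithm cannot yet have a correct answer. This keeps the algorithm running until it has queried all but at most $|P|-1$ entries, giving $n^2 - |P| + 1 = \Omega(n^2)$ queries and hence $\Omega(n^2)$ running time, since each queried entry costs at least constant time and $|P|$ is a fixed constant independent of $n$. I expect the only delicate point is verifying that the ``hide a copy / avoid a copy'' invariant can always be maintained when the number of unqueried cells is large, which follows because $P$ has boundedly many ones and can be embedded in any sufficiently large empty region.
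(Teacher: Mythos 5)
Your overall strategy (an adversary that answers every query with $0$ and argues that the algorithm cannot stop while a copy of $P$ could still be hidden among the unqueried cells) is the same idea as the paper's proof, but your quantitative invariant is wrong, and this is a genuine gap. You claim that as long as more than $|P|$ entries remain unqueried, a containing completion still exists. This is false: the unqueried cells form an arbitrary set determined by the algorithm's query pattern, not a ``sufficiently large empty region,'' and a set of cells can be large yet admit no embedding of $P$. The maximum size of such a set is precisely the extremal function $ex(n,P)$, which for typical patterns grows with $n$ (e.g.\ $ex(n,R^{2,2})=\Theta(n^{3/2})$, far larger than $|P|=4$). Concretely, an algorithm that queries all cells \emph{except} a fixed $P$-avoiding set $S$ of size $ex(n,P)$ may correctly halt after $n^2-ex(n,P)$ zero-answers, since the remaining ones of $A$ lie inside $S$ and hence cannot contain $P$. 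So your conclusion that every correct algorithm must query $n^2-|P|+1$ entries is not true, and the step ``which follows because $P$ has boundedly many ones'' does not hold.

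The fix is exactly what the paper does: the adversary argument shows only that the algorithm must query at least $n^2-ex(n,P)$ entries before declaring avoidance, because if more than $ex(n,P)$ cells are unqueried the adversary can set them all to $1$ and, by the definition of the extremal function, force a copy of $P$. One then needs the additional input that $ex(n,P)\le ex(n,R^{k,\ell})=O(n^{2-\max(k,\ell)/(k\ell)})=o(n^2)$ by the K\H{o}v\'{a}ri--S\'{o}s--Tur\'{a}n bound, so that $n^2-ex(n,P)=\Omega(n^2)$. Your write-up never invokes the extremal function or any such bound, and without it the argument does not close; with it, your adversary framework becomes correct.
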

\begin{proof}
Any containment algorithm has to check at least $n^2-ex(n,P)$ entries to declare that $A$ avoids $P$. Otherwise, if the algorithms skip $ex(n,P)+1$ elements, then in the worst case these skipped entries are all $1$-entries and form a submatrix containing $P$. If $P$ is a $a \times b$ matrix, then we have $ex(n,P)\le ex(n,R^{a,b}) = O(n^{2-{\max(k,\ell) \over k\ell}}) = o(n^2)$ by \cite{KST} and since $R^{a,b}$ contains $P$. Therefore, it follows that the proportion of entries of $A$ that must be checked before concluding that $A$ avoids $P$ is close to $1$. This means that the running time to determine whether $A$ contains $P$ is $\Omega(n^2)$ for any pattern $P$.
\end{proof}

Our next result bounds the complexity of the containment algorithm in terms of the extremal function and the complexity in terms of the number of $1$-entries.

\begin{theo}
\label{upper}
Let $f$ be an increasing function. If $A$ is an $n \times n$ matrix with $x$ one-entries such that there is an $O(f(x))$ algorithm that decides if the given $x$ ones contain $P$ then there is an $O(n^2+f(ex(n,P)))$ algorithm that decides if $A$ contain $P$.
\end{theo}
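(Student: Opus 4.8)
The plan is to use the extremal function as a threshold that lets us avoid ever running the expensive subroutine on an input that is too large. First I would scan all $n^2$ entries of $A$ exactly once, both to count the number $x$ of one-entries and to record their positions in a list. This preprocessing step costs $\Theta(n^2)$, and it simultaneously produces precisely the data that the assumed $O(f(x))$ algorithm requires as input, namely the locations of the $x$ ones.

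Next I would branch on the value of $x$ relative to $ex(n,P)$. By the definition of the extremal function, every $n \times n$ matrix with strictly more than $ex(n,P)$ one-entries must contain $P$. Hence if $x > ex(n,P)$, the algorithm can immediately declare that $A$ contains $P$, with no further computation. Otherwise $x \le ex(n,P)$, in which case I would feed the recorded list of one-entries to the assumed $O(f(x))$ containment algorithm and return its answer directly.

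For the running time, the scan contributes $O(n^2)$ in both branches. In the first branch we halt in constant additional time, so the cost is $O(n^2)$. In the second branch the subroutine runs in $O(f(x))$ time, and since $x \le ex(n,P)$ and $f$ is increasing we have $f(x) \le f(ex(n,P))$, so this branch costs $O(n^2 + f(ex(n,P)))$. Taking the worst case over the two branches yields a total of $O(n^2 + f(ex(n,P)))$, which is the claimed bound.

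The argument is short, so there is no real obstacle to overcome; the one point requiring care is invoking the monotonicity of $f$ correctly, that is, the inequality $f(x) \le f(ex(n,P))$ is legitimate only because we have already guaranteed $x \le ex(n,P)$ before calling the subroutine. The purpose of the extremal threshold is exactly to enforce this inequality, thereby converting the input-dependent bound $f(x)$ into the input-independent bound $f(ex(n,P))$.
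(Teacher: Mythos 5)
Your proposal is correct and follows exactly the same approach as the paper's proof: count the ones in $O(n^2)$ time, declare containment immediately if the count exceeds $ex(n,P)$, and otherwise invoke the assumed $O(f(x))$ subroutine, using monotonicity of $f$ to bound $f(x)$ by $f(ex(n,P))$. Your write-up is in fact slightly more careful than the paper's, since you make explicit both the recording of the one-entries' positions during the scan and the point where the monotonicity of $f$ is actually used.
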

\begin{proof}
First count the number of $1$-entries in $A$, and if there are more than $ex(n,P)$ ones in $A$, then $A$ contains $P$. It takes $O(n^2)$ time to count the $1$-entries of $A$. Otherwise, $A$ has at most $ex(n,P)$ ones. Now we can determine whether these at most $ex(n,P)$ ones contain $P$ in $O(f(ex(n,P))))$ time. Thus the total running time of our algorithm is $O(n^2+f(ex(n,P)))$.
\end{proof}

Our theorem above establishes the existence of a quadratic containment algorithm for certain patterns with linear extremal function.

\begin{cor}
\label{linear}
If $ex(n,P)=O(n)$ and there is an $O(x^2)$ algorithm to determine whether the $x$ one-entries in an $n\times n$ matrix $A$ contain $P$, then there is an algorithm running in time $\Theta(n^2)$ to determine whether $A$ contains $P$.
\end{cor}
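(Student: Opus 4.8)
The plan is to combine the two theorems just proved, treating Corollary~\ref{linear} as essentially a specialization of Theorem~\ref{upper} together with the lower bound of Theorem~\ref{lower}. The corollary asks for a $\Theta(n^2)$ algorithm, which means I must establish both an $O(n^2)$ upper bound on the running time and an $\Omega(n^2)$ lower bound. The lower bound comes for free from Theorem~\ref{lower}, which states that any containment algorithm for any pattern $P$ runs in worst case $\Omega(n^2)$. So the real work is to produce the matching $O(n^2)$ upper bound from the hypotheses.

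For the upper bound, I would apply Theorem~\ref{upper} with the specific choice $f(x) = x^2$. The hypotheses of the corollary are precisely tailored to this: we are given that $ex(n,P) = O(n)$ and that there is an $O(x^2)$ algorithm deciding whether the $x$ one-entries contain $P$. Since $f(x) = x^2$ is increasing, Theorem~\ref{upper} immediately yields an algorithm running in time $O(n^2 + f(ex(n,P))) = O(n^2 + (ex(n,P))^2)$. The key observation is then that substituting $ex(n,P) = O(n)$ into the second term gives $(ex(n,P))^2 = O(n^2)$, so the whole expression collapses to $O(n^2 + n^2) = O(n^2)$.

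Putting the two bounds together, the algorithm guaranteed by Theorem~\ref{upper} runs in worst case $O(n^2)$ by the computation above, and by Theorem~\ref{lower} any such algorithm must run in worst case $\Omega(n^2)$. Hence the running time is $\Theta(n^2)$, as claimed.

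**The main obstacle** here is not really conceptual—the corollary follows almost mechanically from the two preceding theorems—but rather a matter of being careful that the choice $f(x) = x^2$ genuinely satisfies the monotonicity hypothesis of Theorem~\ref{upper} and that the asymptotic substitution $(ex(n,P))^2 = O(n^2)$ is justified. The latter is the only step with any content: it uses that squaring preserves the $O(\cdot)$ relation, i.e.\ $g(n) = O(n)$ implies $g(n)^2 = O(n^2)$, which holds because if $g(n) \le c\,n$ for large $n$ then $g(n)^2 \le c^2 n^2$. I would make sure to state this explicitly rather than leave it implicit, since it is the one place where the specific form of $f$ and the linear extremal bound interact.
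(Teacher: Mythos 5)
Your proposal is correct and follows exactly the paper's argument: apply Theorem~\ref{upper} with $f(x)=x^2$ and $ex(n,P)=O(n)$ to get the $O(n^2+n^2)=O(n^2)$ upper bound, then invoke Theorem~\ref{lower} for the matching $\Omega(n^2)$ lower bound. The only difference is that you spell out the justification that $g(n)=O(n)$ implies $g(n)^2=O(n^2)$, which the paper leaves implicit.
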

\begin{proof}
Theorem \ref{upper} shows that there is an algorithm running in time $O(n^2+n^2)=O(n^2)$ and Theorem \ref{lower} shows that this algorithm must run in time $\Omega(n^2)$ .
\end{proof}

We can also establish a weaker bound of $o(n^4)$ for more general patterns $P$.

\begin{cor}
For any pattern $P$, if there is an $O(x^2)$ algorithm to determine whether the $x$ one-entries in an $n \times n$ matrix $A$ contain $P$, then there is an $o(n^4)$ algorithm to determine whether $A$ contains $P$.
\end{cor}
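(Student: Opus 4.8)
The plan is to invoke Theorem~\ref{upper} with the choice $f(x) = x^2$. This $f$ is increasing, and by hypothesis there is an $O(x^2)$ algorithm deciding whether the $x$ one-entries of $A$ contain $P$. Theorem~\ref{upper} therefore immediately produces an algorithm for deciding whether $A$ contains $P$ with running time $O\big(n^2 + (ex(n,P))^2\big)$. So the whole task reduces to showing that this bound is $o(n^4)$.

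The key step is to estimate $(ex(n,P))^2$. I would reuse the estimate already established in the proof of Theorem~\ref{lower}: if $P$ is an $a \times b$ pattern, then $R^{a,b}$ contains $P$, so that
\[
ex(n,P) \le ex(n, R^{a,b}) = O\big(n^{2 - \frac{\max(a,b)}{ab}}\big)
\]
by the K\H{o}v\'{a}ri--S\'{o}s--Tur\'{a}n bound \cite{KST}. Squaring gives $(ex(n,P))^2 = O\big(n^{4 - \frac{2\max(a,b)}{ab}}\big)$. Since $a,b \ge 1$, the exponent correction $\frac{2\max(a,b)}{ab}$ is strictly positive, so the exponent $4 - \frac{2\max(a,b)}{ab}$ is strictly less than $4$, and hence $(ex(n,P))^2 = o(n^4)$. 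Combining with $n^2 = o(n^4)$ yields a total running time of $o(n^4)$, as claimed.

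The substance of the argument, and what distinguishes it from Corollary~\ref{linear}, is really the observation that the extremal function of \emph{any} fixed pattern is strictly subquadratic, a fact guaranteed by K\H{o}v\'{a}ri--S\'{o}s--Tur\'{a}n regardless of whether $ex(n,P)$ happens to be linear. I do not expect a genuine obstacle: once Theorem~\ref{upper} supplies the $O\big(n^2 + (ex(n,P))^2\big)$ bound, the only thing to verify is that squaring a function of order $o(n^2)$ produces a function of order $o(n^4)$, which the KST exponent makes precise. The one point deserving care is to confirm that the subquadratic gap is uniform enough to survive squaring, i.e.\ that $ex(n,P) = O(n^{2-\delta})$ for some fixed $\delta = \frac{\max(a,b)}{ab} > 0$ depending only on the fixed pattern $P$; this is exactly what the KST bound provides, so no stronger structural input on $P$ is needed.
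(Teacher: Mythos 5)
Your proposal is correct and follows essentially the same route as the paper: the paper's proof likewise reduces to the K\H{o}v\'{a}ri--S\'{o}s--Tur\'{a}n bound $ex(n,P) \le ex(n,R^{k,\ell}) = O(n^{2-\frac{\max(k,\ell)}{k\ell}}) = o(n^2)$ and (implicitly) the $O(n^2 + f(ex(n,P)))$ bound of Theorem~\ref{upper} with $f(x)=x^2$. You merely spell out the final squaring step more explicitly than the paper does, which is a harmless (indeed welcome) elaboration; note also that $g(n)=o(n^2)$ already implies $g(n)^2=o(n^4)$ directly, so even the polynomial gap from KST is more than is strictly needed for that last step.
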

\begin{proof}
If $P$ is of size $k \times \ell$, then $R^{k,\ell}$ contains $P$ so that $ex(n,P) \le ex(n, R^{k,\ell})= O(n^{2-{\max(k,\ell) \over k\ell}}) = o(n^2)$ from \cite{KST}.
\end{proof}

In the next section, we present algorithms for many specific patterns $P$ with running time of $O(n^2)$. It thereby follows from Theorem \ref{lower} that their worst case running time is precisely $\Theta(n^2)$.

\medskip

\section{Algorithms for specific patterns $P$}

In this section, for special $k \times \ell$ matrices $P$ we improve the $O(n^{k+\ell})$ running time in the naive containment algorithm. For $P$ a column of all-ones, an identity matrix, a tuple identity matrix, an $L$-shaped pattern, or a cross pattern, we present algorithms that determine whether a $n\times n$ zero-one matrix $A$ contains $P$ in $O(n^2)$ time. It follows from Theorem \ref{lower} that their running time is $\Theta(n^2)$, the lowest possible order for a containment algorithm. When $P$ is a $k \times \ell$ matrix, our best algorithm runs in time $O(n^{\min(k,\ell)+1})$, which is still an improvement on the naive $O(n^{k+\ell})$ running time.

The easiest case is when $P$ is a column of ones.

\begin{theo} 
If $P$ is a $k \times 1$ all ones matrix, then there is an $O(n^2)$ algorithm to determine whether an arbitrary $n \times n$ zero-one matrix $A$ contains $P$.
\end{theo}
\begin{proof}
For a $n \times n$ zero-one matrix $A$ there is an $O(n^2)$ algorithm which scans $A$ column by column, and whenever it finds any column of $A$ with at least $k$ ones, it stops and determines $A$ contains $P$. Otherwise $A$ doesn't contain $P$ after the algorithm scans through all the columns.
\end{proof}

We present a more complicated algorithm to determine containment of an identity matrix.

\begin{theo} 
If $P$ is an identity matrix, then there is an $O(n^2)$ algorithm to determine whether an arbitrary $n \times n$ zero-one matrix $A$ contains $P$.
\end{theo}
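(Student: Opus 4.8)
The plan is to first translate the containment question into a purely combinatorial statement about the $1$-entries of $A$. Since $P$ is the $k \times k$ identity, a $k \times k$ submatrix of $A$ can be turned into $P$ by changing ones to zeros exactly when that submatrix already has ones along its main diagonal. Hence $A$ contains $P$ if and only if there are row indices $r_1 < r_2 < \cdots < r_k$ and column indices $c_1 < c_2 < \cdots < c_k$ with $A[r_t,c_t]=1$ for every $t$; equivalently, if and only if the $1$-entries of $A$, ordered by the coordinatewise dominance order in which $(i,j)$ precedes $(i',j')$ when $i<i'$ and $j<j'$, contain a chain of length $k$. So it suffices to compute the length of a longest such increasing ``staircase'' of ones and compare it with $k$.

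Second, I would compute this longest chain by dynamic programming. For each $1$-entry $(i,j)$ let $\ell(i,j)$ be the length of a longest staircase of ones ending at $(i,j)$, so that
$$\ell(i,j) = 1 + \max\{\ell(i',j') : i'<i,\ j'<j,\ A[i',j']=1\},$$
where the empty maximum is $0$, and $A$ contains $P$ precisely when $\max_{i,j}\ell(i,j) \ge k$. Evaluating this recurrence directly inspects all earlier ones for each entry and is too slow. To reach $O(n^2)$ I would process the rows of $A$ from top to bottom while maintaining an array $M[0..n]$ in which $M[c]$ records the largest value of $\ell$ attained so far at a one lying in some column $\le c$; since $M$ is nondecreasing in $c$, the maximum needed for an entry $(i,j)$ is simply $M[j-1]$, giving $\ell(i,j)=M[j-1]+1$ in constant time. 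After the whole row is handled I would fold its new $\ell$-values into $M$ with a single left-to-right prefix-maximum sweep. Each row then costs $O(n)$ time, the total is $O(n^2)$, and we may stop and report containment as soon as some $\ell$ reaches $k$.

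The step I expect to be the main obstacle, and the one needing the most care, is guaranteeing that the recurrence only ever chains ones from strictly earlier rows: two ones in the same row can never both belong to a staircase, yet a careless single-pass update of $M$ within a row would permit combining them and would overcount. The fix is to compute all of the current row's values by reading the array $M$ as it stood before the row began, and to perform the prefix-maximum update only after the entire row has been processed. Once this bookkeeping is correct, the equivalence of the first paragraph together with an induction on rows for the recurrence yields correctness. I note that a shorter but less self-contained route is also available: the identity matrix is a permutation matrix, so $ex(n,P)=O(n)$ by \cite{MT}, and the straightforward $O(x^2)$ chain computation on $x$ ones then gives a $\Theta(n^2)$ algorithm directly via Corollary \ref{linear}; I prefer the explicit row-scanning algorithm above because it avoids appealing to the extremal bound.
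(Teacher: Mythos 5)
Your proof is correct and takes essentially the same approach as the paper: both reduce containment of the $k \times k$ identity to finding a chain of $k$ ones with strictly increasing row and column indices, and both compute the longest such chain with an $O(n^2)$ dynamic program. The paper fills a two-dimensional table $D[r][c]$ of suffix maxima scanning from the bottom-right corner, while you maintain a one-dimensional prefix-maximum array row by row; these are interchangeable bookkeeping choices, and your point about deferring the update of $M$ until a row is finished supplies a correctness detail the paper leaves implicit.
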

\begin{proof}
Let $P$ be a $k\times k$ identity matrix. We maintain an array $D[n+1][n+1]$ and we follow matrix notation by starting indices from $1$ rather than from $0$. We set $D[r][c]=0$ if $r=n+1$ or $c=n+1$. For $1\leq r\leq n, 1 \leq c \leq n$, $D[r][c]$ is the maximum number of ones from column $c$ to the last column of $A$ such that (1) all these ones are in rows between $r$ and $n$, inclusive and (2) all these ones form a $D[r][c]\times D[r][c]$ identity matrix when we remove all columns and rows not containing these ones.

Initially $D[r][c]$ is $0$ for all $r$. The algorithm updates $D$ as follows.

\begin{lstlisting}
 For c = n to 1
   For r = n to 1
     D[r][c]=max(D[r+1][c],D[r][c+1])
     if A(r,c) = 1
       D[r][c]=max(D[r][c], 1 + D[r+1][c+1])
\end{lstlisting}

The algorithm reports $A$ contains $B$ whenever some $D[r][c]$ hits $k$. It is easy to see that this algorithm has complexity $O(n^2)$.
\end{proof}

Now we generalize our algorithm for the identity matrix to also work for the tuple identity matrix.

\begin{theo} 
If $P$ is a tuple identity matrix, then there is an $O(n^2)$ algorithm to determine whether an arbitrary $n \times n$ zero-one matrix $A$ contains $P$.
\end{theo}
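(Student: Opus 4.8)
The plan is to generalize the dynamic program used for the identity matrix. Write the pattern $P$ as the $j$-tuple of a $k\times k$ identity, so that $P$ is a $k\times jk$ matrix whose $i$-th block of $j$ consecutive columns has ones exactly in row $i$. Then $A$ contains $P$ precisely when one can choose rows $r_1<r_2<\cdots<r_k$ of $A$ together with, for each $i$, a choice of $j$ entries equal to $1$ in row $r_i$, so that reading all $jk$ chosen entries from left to right their column indices are strictly increasing. Equivalently, scanning columns left to right we must first collect $j$ ones in row $r_1$, then $j$ ones in row $r_2$ at strictly larger columns, and so on. I would phrase containment this way to reduce the two-dimensional embedding problem to a one-pass block-completion problem.

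First I would precompute, for every row $r$ and every column $c$, the quantity $\mathrm{jump}(r,c)$, defined as the column immediately following the $j$-th entry equal to $1$ in row $r$ among columns $\ge c$ (and $\mathrm{jump}(r,c)=\infty$ if row $r$ has fewer than $j$ such ones). For a fixed row this is obtained by a right-to-left scan: first compute the next-one position $\mathrm{next}(r,c)$, the least column $\ge c$ holding a $1$, and then chain $j$ applications of $\mathrm{next}$. Since $j$ is a constant this costs $O(j)$ per cell, hence $O(jn^2)=O(n^2)$ overall. Next I would run a block-level dynamic program that mirrors the identity recurrence but advances one whole $j$-block at a time. Maintain an array $B$, indexed by $1\le r\le n+1$ and $1\le c\le n+1$, with $B[r][c]=0$ whenever $r=n+1$ or $c=n+1$, where $B[r][c]$ is the maximum number of complete $j$-blocks assemblable using only rows $\ge r$ and columns $\ge c$, assuming we are at a block boundary. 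Processing $r$ from $n$ down to $1$ and, within each $r$, $c$ from $n$ down to $1$, set
\[
B[r][c]=\max\bigl(B[r+1][c],\; B[r][c+1],\; \mathbf{1}[\mathrm{jump}(r,c)<\infty]\cdot(1+B[r+1][\mathrm{jump}(r,c)])\bigr).
\]
The first two terms skip row $r$ or column $c$; the third devotes row $r$ to a single block, greedily consuming its first $j$ available ones at columns $\ge c$ and then requiring the next block to lie in a strictly larger row $r+1$ and at columns $\ge\mathrm{jump}(r,c)$. All referenced entries are already computed, so each update is $O(1)$ and the whole program is $O(n^2)$. The algorithm reports that $A$ contains $P$ exactly when $B[1][1]\ge k$.

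The main obstacle is justifying that greedily completing each block at its leftmost possible columns is optimal, so that committing a row to the first $j$ ones at columns $\ge c$ never loses a solution. I would prove this by an exchange argument: given any valid embedding whose block in row $r$ occupies columns $c_1<\cdots<c_j$, the greedy choice occupies columns $c_1'\le c_1,\ldots,c_j'\le c_j$, so in particular $c_j'\le c_j$; replacing the original block by the greedy one leaves every subsequent block's feasible column range at least as large, while the rows assigned to later blocks are unchanged, so the number of completable blocks does not decrease. Combined with the standard correctness argument for the skip-row and skip-column transitions, which is identical to the identity case, this shows that $B[r][c]$ equals the true maximum number of blocks obtainable from rows $\ge r$ and columns $\ge c$, whence $B[1][1]\ge k$ is equivalent to containment of $P$. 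Since the pattern is fixed, $j$ and $k$ are constants, and both the precomputation and the dynamic program run in $O(n^2)$ time.
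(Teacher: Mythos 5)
Your proposal is correct and is essentially the paper's own argument: the paper runs the same identity-matrix dynamic program augmented with a precomputed ``jump to just past the $j$-th one'' index (its $H(r,c)$, your $\mathrm{jump}(r,c)$), and your exchange argument makes explicit the greedy-leftmost correctness claim the paper leaves implicit. The only discrepancy is that the paper defines a $j$-tuple identity matrix by replacing each entry with a \emph{column} of $j$ entries (so $P$ is $jk\times k$ and the jump runs over rows within a column), whereas you use the transposed $k\times jk$ convention; since $A$ is square this is harmless, as applying your algorithm to $A^{T}$ (or swapping the roles of rows and columns throughout) decides containment of the paper's pattern in the same $O(n^2)$ time.
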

\begin{proof}
Let $P$ be a $jk \times k$ tuple identity matrix, which is obtained by replacing each one of a $k \times k$ identity matrix with a $j \times 1$ all ones matrix and each zero of the identity matrix with a $j \times 1$ all zeroes matrix. The following algorithm, similar to the algorithm for the identity matrix, determines whether $A$ contains $P$ in time $O(n^2)$. 

Now $D[r][c]$ indicates the maximum width of a $j$ tuple identity matrix contained by the submatrix of $A$ within row $r$ to $n$ and column $c$ to $n$. The algorithm reports true if any $D[r][c]$ reaches $k$.

The algorithm proceeds as before, but for each column $c$ and each row $r$, if it exists, we need to know a row index $H(r,c)$, which is the smallest row index such that there are $j$ ones between rows $r$ and $H(r,c)-1$, inclusive, of column $c$.  In our outer loop of the algorithm which scans $A$ by column,  we use an overhead array called oneIndices, which is a list of the row indices of ones in the current column, so that we can compute $H(r,c)$ with complexity $O(n)$ per column. Therefore, this overhead keeps overall asymptotic complexity at $O(n^2)$. Our algorithm is written out in full below.

\begin{lstlisting}
 For c = n to 1
   oneIndices=[]
   For r = n to 1
     D[r][c]=max(D[r+1][c],D[r][c+1])
     if A(r,c) = 1
      oneIndices.append(r)
      if(len(oneIndices)>=j) 
        H(r,c)=oneIndices(len(oneIndices)-j)}
        D[r][c]=max(D[r][c], 1 + D[H(r,c)][c+1])
\end{lstlisting}
\end{proof}

Now we present an algorithm for containment of $L$-shaped patterns that also has complexity $O(n^2)$.

\begin{theo} 
If $P$ is an $L$-shaped pattern matrix, then there is an $O(n^2)$ algorithm to determine whether an arbitrary $n \times n$ zero-one matrix $A$ contains $P$.
\end{theo}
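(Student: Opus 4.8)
The plan is to reduce containment of an $L$-shaped pattern to a single scan by characterizing it in terms of one ``corner'' entry. Suppose $P$ is $k \times \ell$, so that its first column carries $k$ ones and its last row carries $\ell$ ones, the two bars meeting at the bottom-left entry $P[k][1]=1$. I claim that $A$ contains $P$ if and only if there is some entry $(r,c)$ with $A[r][c]=1$ such that column $c$ contains at least $k$ ones in rows $1$ through $r$, and row $r$ contains at least $\ell$ ones in columns $c$ through $n$. The point is that any embedding of $P$ must place its corner at some position $(r_k,c_1)$, and once that corner is fixed the vertical requirement (the $k$ ones of the first column lie in column $c_1$ at rows $\le r_k$) and the horizontal requirement (the $\ell$ ones of the last row lie in row $r_k$ at columns $\ge c_1$) decouple completely.

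First I would establish this equivalence. The forward direction is immediate: given an embedding on rows $r_1 < \cdots < r_k$ and columns $c_1 < \cdots < c_\ell$, the corner $(r_k,c_1)$ has $A[r_k][c_1]=1$, has the $k$ ones $r_1,\dots,r_k$ all in column $c_1$ at rows $\le r_k$, and has the $\ell$ ones $c_1,\dots,c_\ell$ all in row $r_k$ at columns $\ge c_1$. For the converse, if $(r,c)$ meets the two count conditions, then since $A[r][c]=1$ the entry $(r,c)$ is itself one of the at-least-$k$ ones of its column and one of the at-least-$\ell$ ones of its row; choosing $(r,c)$ as the corner, any $k-1$ of the ones lying above it in column $c$, and any $\ell-1$ of the ones lying to its right in row $r$ gives rows $r_1<\cdots<r_k=r$ and columns $c=c_1<\cdots<c_\ell$ whose induced submatrix carries a one at every position where $P$ has a one, so $A$ contains $P$.

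Next I would turn the characterization into the algorithm. Define $U[r][c]$ to be the number of ones in column $c$ among rows $1,\dots,r$ and $S[r][c]$ to be the number of ones in row $r$ among columns $c,\dots,n$. Both tables obey one-line recurrences, $U[r][c]=U[r-1][c]+A[r][c]$ and $S[r][c]=S[r][c+1]+A[r][c]$, with the obvious boundary values, so they fill in $O(n^2)$ total time. A final pass reports that $A$ contains $P$ the instant it finds an entry $(r,c)$ with $A[r][c]=1$, $U[r][c]\ge k$, and $S[r][c]\ge \ell$; if none exists, $A$ avoids $P$. Each phase costs $O(n^2)$, so the whole algorithm is $O(n^2)$, which by Theorem \ref{lower} is optimal.

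I do not expect a serious obstacle, since the work is essentially finding the right reformulation. The one point needing care is the converse of the equivalence: I must verify that the corner $(r,c)$ can simultaneously serve as the bottommost of the selected column-ones and the leftmost of the selected row-ones, so that the chosen rows and columns are strictly ordered and the corner is genuinely the shared entry $P[k][1]$. The count inequalities supply the required $k-1$ ones above and $\ell-1$ ones to the right precisely because the corner itself is counted in both $U[r][c]$ and $S[r][c]$; once this is checked, correctness and the running-time claim follow directly.
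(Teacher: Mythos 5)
Your proof is correct, and it takes a somewhat different route from the paper's. Your key step is the explicit if-and-only-if characterization at the corner entry: $A$ contains the $k\times\ell$ $L$-pattern exactly when some $1$-entry $(r,c)$ has at least $k$ ones of column $c$ at rows $\le r$ and at least $\ell$ ones of row $r$ at columns $\ge c$; both directions are argued correctly, including the delicate point that $(r,c)$ is counted once in each tally so that $k-1$ ones strictly above and $\ell-1$ ones strictly to the right remain available. The paper instead runs a single bottom-to-top scan keeping one counter per column: a column's counter is first incremented only at a row containing at least $\ell$ ones at or to the right of that column (so that row can serve as the bottom bar of the $L$), and thereafter any further one in that column increments it, with containment reported when a counter reaches the height of $P$. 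Both algorithms exploit the same decoupling of the vertical and horizontal requirements at the corner, but yours makes correctness an explicit lemma and reduces the algorithm to two prefix-count tables plus a final scan, whereas the paper's is a stateful one-pass scheme using $O(n)$ working memory whose correctness is asserted rather than proved. As written your tables use $O(n^2)$ space, though $U$ can be maintained column-prefix by row and $S$ recomputed per row, recovering $O(n)$ space; in either case the running time is $O(n^2)$, which is optimal by Theorem \ref{lower}.
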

\begin{proof}
Let $P$ be a $m \times n$ matrix that is an $L$-shaped pattern of ones with $P_{i,j}=1$ iff $j=1$ or $i=m$.

Keep $n$ counters, one for each column. Scan $A$ row by row from bottom to top. If a row has $k \ge n$ ones, at column indices $x_1, \ldots, x_k$, then increment each of the counters $x_1, \ldots, x_{k-n+1}$ by $1$. Also for any of the column counters $x_{k-n+2}, \ldots, x_k$ which are already positive, increment them by one as well. Otherwise if a row has $k<n$ ones, increment any of the counters $x_1, \ldots, x_k$ which are already positive by one. Whenever a counter hits $m$, $A$ contains $P$. It is easy to see that this algorithm has complexity $O(n^2)$.
\end{proof}

Now we generalize our result for $L$-shaped patterns to cross patterns. However, our algorithm for general cross patterns is  more complex than our algorithm above for $L$-shaped patterns.

\begin{theo} 
If $P$ is a cross pattern matrix, then there is an $O(n^2)$ algorithm to determine whether an arbitrary $n \times n$ zero-one matrix $A$ contains $P$.
\end{theo}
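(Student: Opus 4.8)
The plan is to generalize the $L$-shaped algorithm. A cross pattern is a $m \times w$ matrix with a full row at index $i_0$ and a full column at index $j_0$; the $L$-shaped case is the special corner position $i_0 = m$, $j_0 = 1$. The key structural observation is that a column $c$ of $A$ can serve as the vertical bar of the cross only if it contains the required number of ones split correctly around the pivot row: there must be at least $i_0 - 1$ ones above the row we choose as the crossing row, and at least $m - i_0$ ones at or below it. Simultaneously, the crossing row itself must supply the horizontal bar, meaning it needs at least $j_0 - 1$ ones in columns to the left of $c$ and at least $w - j_0$ ones at or to the right of $c$. So the containment question reduces to: does there exist a row $r$ and column $c$ such that column $c$ has enough ones above and below $r$, and row $r$ has enough ones to the left and right of $c$?

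First I would precompute, in $O(n^2)$ time, four prefix-count arrays: for each entry $(r,c)$, the number of ones in column $c$ strictly above row $r$, the number at or below, and the analogous left/right counts within row $r$. Each of these is a single sweep with running sums, so the preprocessing stays quadratic. From these arrays I can, for any candidate crossing cell $(r,c)$, decide in $O(1)$ whether column $c$ meets the vertical requirement ($\geq i_0 - 1$ ones above and $\geq m - i_0$ at or below $r$) and whether row $r$ meets the horizontal requirement ($\geq j_0 - 1$ ones to the left and $\geq w - j_0$ at or to the right of $c$). Scanning all $n^2$ cells and testing each gives an $O(n^2)$ decision procedure, and combined with Theorem \ref{lower} the running time is $\Theta(n^2)$.

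The subtlety I expect to be the main obstacle is correctness of the independence of the two bars. Choosing the crossing cell fixes a row and a column, and the vertical bar only needs ones in column $c$ while the horizontal bar only needs ones in row $r$; the two sets of required ones overlap only at the single cell $(r,c)$ itself, so there is no double-counting conflict as long as $A(r,c)=1$ is guaranteed whenever we declare a match. I would therefore restrict the scan to cells with $A(r,c)=1$, since the crossing entry of any cross pattern is a one. The remaining care is purely bookkeeping: the vertical ones chosen above and below $r$ and the horizontal ones chosen left and right of $c$ automatically occupy distinct rows and columns (they differ from each other in either row or column index), so selecting the topmost/leftmost available ones yields a genuine $m \times w$ submatrix that maps onto $P$ after zeroing out extra ones. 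Once this disjointness is argued, the prefix-sum reduction delivers the claimed $O(n^2)$ bound immediately.
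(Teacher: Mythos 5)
Your approach is essentially the paper's: both reduce containment of a cross to finding a single $1$-entry of $A$ with enough ones in each of the four directions along its row and column, with the thresholds read off from the position of the crossing cell in $P$, and both justify correctness by noting that the vertical and horizontal bars meet only at that one cell. The only real difference is bookkeeping: you compute the directional counts with dense prefix sums over all $n^2$ cells, while the paper threads linked lists through the $1$-entries and obtains the counts in $O(x)$ time, where $x$ is the number of ones. Both give $O(n^2)$ for this theorem, but the paper's sparse version is the one relevant to its closing question about $O(x)$ containment algorithms.

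One concrete slip: your thresholds are off by one on the ``at or below'' and ``at or to the right'' sides. The full column of the pattern has $i_0-1$ ones strictly above the crossing and $m-i_0+1$ ones at or below it (the crossing included), so demanding only $m-i_0$ ones at or below row $r$ guarantees just $m-1$ ones in column $c$ in total. For instance, with $P$ a $3\times 1$ cross whose full row is at $i_0=2$, a column of $A$ containing exactly two ones passes your test at its lower one --- a false positive. Requiring $m-i_0$ ones strictly below $r$ and $w-j_0$ ones strictly to the right of $c$, given that $A(r,c)=1$, fixes this and matches the paper's conditions $U_i \ge c-1$, $D_i \ge a-c$, $L_i \ge d-1$, $R_i \ge b-d$.
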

\begin{proof}
Let $P$ be an $a \times b$ matrix that is a cross pattern with $P(i,j)=1$ iff $i=c$ or $j=d$ where $c$ and $d$ are constants such that $1 \le c \le a$ and $1 \le d \le b$.

Let $x$ be the number of $1$-entries in $A$. We show that there is an $O(x)$ algorithm to determine whether $A$ contains $P$. Since $x \le n^2$, this algorithm runs in time $O(n^2)$. Associate each $1$-entry $e_k$ in $A$ with at most $4$ links, or pointers, to other $1$-entries in $A$. Specifically, there is a link to the next $1$-entry to the right of $e_k$ in the same row, or if it doesn't exist, the left most $1$-entry in the next row below. There is another link to the next $1$-entry to the left of $e_k$ in the same row, or if it doesn't exist, the right most $1$-entry in the next row below. And the other two links are the two analogous vertical links. Furthermore, each $1$-entry may have special marks indicating that it is the top/bottom/right-most/left-most $1$-entry of that column/row.

Following these $4$ links and marks, we can compute and store $R_k$, the number of $1$-entries to the right of $e_k$ in the same row, and similary $L_k$, $U_k$, and $D_k$ in $4$ linear traversals, i.e. $O(x)$ time. Finally, in one extra traversal the algorithm reports that $A$ contains $P_{a \times b}$ if there exists an $1$-entry $e_i$ such that $L_i \ge d-1$, $R_i \ge b-d$, $U_i \ge c-1$, and $D_i \ge a-c$. Thus the algorithm to determine containment for $P_{a \times b}$ runs in $O(x)=O(n^2)$ time.
\end{proof}

Finally, we present a containment algorithm for a rectangular matrix of all ones. This algorithm does not run in $O(n^2)$ time, however.

\begin{theo} 
If $P$ is a $k \times \ell$ matrix of all ones, then there is an $O(n^{\min(k,\ell)+1})$ algorithm to determine whether an arbitrary $n \times n$ zero-one matrix $A$ contains $P$.
\end{theo}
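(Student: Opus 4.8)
The plan is to prove containment of the all-ones rectangle $R^{k,\ell}$ by reducing the two-dimensional search to a sequence of one-dimensional subset checks, choosing to iterate over the smaller of the two dimensions so that the exponent becomes $\min(k,\ell)$ rather than $k$ or $\ell$ arbitrarily. Without loss of generality I would assume $k = \min(k,\ell)$, so we are looking for $k$ rows and $\ell$ columns whose common intersection is all ones; if instead $\ell < k$, transpose $A$ and $P$ at the outset, which does not change whether $A$ contains $P$ and costs only $O(n^2)$.

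The central idea is that once we fix which $k$ rows will play the role of the $k$ rows of $P$, the remaining task is purely about columns. First I would enumerate all $\binom{n}{k} = O(n^k)$ choices of $k$ rows $r_1 < r_2 < \cdots < r_k$. For each such choice, the rectangle $R^{k,\ell}$ is contained using exactly these rows if and only if there exist $\ell$ columns each of which has a $1$ in every one of the chosen rows. So for a fixed row set, I would scan all $n$ columns once; a column $c$ is \emph{good} for this row set precisely when $A(r_1,c) = A(r_2,c) = \cdots = A(r_k,c) = 1$, which is checked in $O(k) = O(1)$ time per column since $k$ is a constant. Counting the good columns takes $O(n)$ time, and $A$ contains $P$ with this particular row set exactly when the number of good columns is at least $\ell$. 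The algorithm reports containment as soon as some row set yields at least $\ell$ good columns, and reports avoidance if no row set does.

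The running time analysis is then immediate: there are $O(n^k)$ row sets, and each is processed in $O(n)$ time, giving $O(n^{k+1}) = O(n^{\min(k,\ell)+1})$ total after the initial transposition step, which dominates the $O(n^2)$ transposition cost whenever $\min(k,\ell) \ge 1$. This is a genuine improvement over the naive $O(n^{k+\ell})$ bound because we have replaced the explicit enumeration of $\ell$ columns by a single linear pass that merely \emph{counts} qualifying columns rather than enumerating all size-$\ell$ subsets of them.

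The main obstacle is not the correctness of the counting step, which is essentially a definitional unpacking of what it means for $R^{k,\ell}$ to be contained, but rather verifying that the two reductions compose correctly: namely that the ``choose smaller dimension'' transposition is applied consistently and that the threshold $\ell$ (not $k$) is the one being tested against the column count after transposition. I would therefore state carefully that after possibly transposing so that the iterated dimension is $\min(k,\ell)$, the counted objects are the columns of the transposed matrix and the threshold is the other dimension. A secondary point worth making explicit is that we only need the \emph{existence} of $\ell$ good columns, so the algorithm can terminate early and need not record which columns are good, keeping the per-row-set cost at $O(n)$ rather than $O(n + \ell)$ storage.
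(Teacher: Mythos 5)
Your proof is correct and takes essentially the same approach as the paper: enumerate all $O(n^{\min(k,\ell)})$ subsets of size $\min(k,\ell)$ of one dimension and, for each, count the lines in the other dimension that are all ones on that subset, reporting containment when the count reaches the other pattern dimension. The only difference is loop order and memory --- the paper scans rows in the outer loop and maintains a counter for every column tuple simultaneously, while you fix the tuple in the outer loop and count with $O(1)$ extra space --- but the reduction and the $O(n^{\min(k,\ell)+1})$ analysis are the same.
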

\begin{proof}
An algorithm with complexity $O(n^3)$ and memory $O(n^2)$ can decide whether a given matrix $A$ contains $P$ if $P$ is a $k \times 2$ all-ones matrix. It scans $A$ row by row and keeps ${n \choose 2}$ counters for each unordered pair $(a,b)$ of columns $a$ and $b$. If both $A_{i,a}$ and $A_{i,b}$ in row $i$ are $1$ then counter $(a,b)$ is incremented by $1$. The algorithm reports true if any counter hits $k$. If $P$ is a $k \times \ell$ all-ones matrix, a similar algorithm determines containment with time complexity $O(n^{\ell+1})$. Similarly if we instead scan $A$ column by column first and use counters for pairs of rows, then we get an algorithm wth time complexity $O(n^{k+1})$. Taking the more efficient of these two algorithms gives our result.
\end{proof}

\section{Conclusion and Open Problems}
In this paper, we analyzed the complexity of algorithms which determine whether a given $n \times n$ matrix contains a specific pattern $P$. We gave $\Theta(n^2)$ algorithms for basic patterns $P$ such as identity matrices, tuple identity matrices, column all ones matrices, $L$-shaped matrices, and cross-patterns. We also obtained an $O(n^{\min(k,\ell)+1})$ algorithm when $P$ is a $k \times \ell$ all-ones matrix. 

For which patterns $P$ do we have a containment algorithm running in time $\Theta(n^2)$? To answer this question, it may be useful to rephrase the question in terms of the number of $1$-entries in $A$. Given a $0$-$1$ matrix $A$ with the $1$-entries $e_{1}, \ldots, e_{x}$, when can we determine whether $A$ contains $P$ with an $O(x)$ algorithm? This will guarantee an $O(n^2)$ containment algorithm. For any $0$-$1$ matrix $A$ with the $1$-entries $e_{1}, ..., e_{x}$, can we can always determine whether $A$ contains $P$ with an $O(x^2)$ algorithm? If this is true, then for patterns $P$ with $ex(n,P)=O(n)$ we would have an $O(n^2)$ containment algorithm.

We know that our $O(n^2)$ algorithms have the most efficient worst-case running time for any containment algorithm. However, it is not clear whether our $O(n^{\min(k,\ell)+1})$ algorithm for $P$ an all-ones $k \times \ell$ matrix can be improved. Our only known lower bound on an algorithm for this pattern $P$ is $\Omega(n^2)$. Therefore, we ask: what is the fastest running time for an algorithm to determine whether a $n \times n$ matrix $A$ contains a $k \times \ell$ matrix $B$ with all ones?

\section{Acknowledgments}
CrowdMath is an open program created by the MIT Program for Research in Math, Engineering, and Science (PRIMES) and Art of Problem Solving that gives high school and college students all over the world the opportunity to collaborate on a research project. The 2016 CrowdMath project is online at http://www.artofproblemsolving.com/polymath/mitprimes2016. 

\bibliographystyle{amsplain}

\end{document}